\newcommand{\R}{\mathbb{R}}
\newcommand{\N}{\mathbb{N}}
\newcommand{\T}{\mathcal{T}} 
\newcommand{\Ra}{\mathcal{R}} 
\newcommand{\CC}{\mathcal{C}} 
\renewcommand{\bar}{\overline}
\renewcommand{\rho}{\varrho}
\renewcommand{\phi}{\varphi}
\newtheorem{thm}{Theorem}[section]
\newtheorem{rem}{Remark}[section]
\title{The combined effects of rotation and anisotropy on double diffusive bi-disperse convection}
\author{  
F. Capone\thanks{Corresponding author.} \href{https://orcid.org/0000-0002-0672-999X}{\includegraphics[scale=0.1]{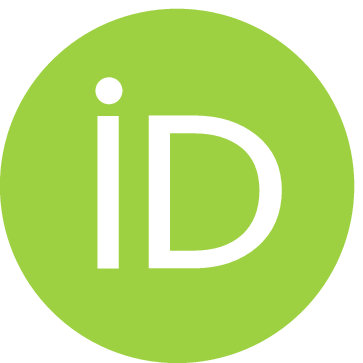}} \\ Dipartimento di Matematica e Applicazioni 'R.Caccioppoli' \\ Universit\'a degli Studi di Napoli Federico II \\ Via Cintia, Monte S.Angelo, 80126 Napoli \\ Italy \\ 
\texttt{fcapone@unina.it} \\
\And 
Roberta De Luca\href{https://orcid.org/0000-0002-2109-7564}{\includegraphics[scale=0.1]{orcid.eps}} \\ Dipartimento di Matematica e Applicazioni 'R.Caccioppoli' \\ Università degli Studi di Napoli Federico II \\ Via Cintia, Monte S.Angelo, 80126 Napoli \\ Italy \\   
\texttt{roberta.deluca@unina.it} \\     
\And  
G. Massa\href{https://orcid.org/0000-0002-8401-9176}{\includegraphics[scale=0.1]{orcid.eps}} \\ Dipartimento di Matematica e Applicazioni 'R.Caccioppoli' \\ Università degli Studi di Napoli Federico II \\ Via Cintia, Monte S.Angelo, 80126 Napoli \\ Italy \\   
\texttt{giuliana.massa@unina.it} \\ }
\begin{document}
\maketitle

\begin{abstract}
In the present paper double-diffusive convection, taking into account Coriolis effects, in a horizontal layer of Brinkman-anisotropic bi-disperse porous medium is analysed. Via linear instability analysis, we found that convection can set in through stationary or oscillatory motions and the critical Rayleigh numbers for the onset of stationary secondary flow (steady convection) and overstability (oscillatory convection) are determined.  
\end{abstract}
\keywords{Bi-disperse Porous Media \and Rotating Layer \and Double diffusion \and Instability analysis \and Anisotropy}

\section{Introduction}
The onset of convection is a widely studied problem due to its theoretical and applicative implications \cite{NB, BSLibro}. Recently, many researchers are turning their attention to bi-disperse convection, i.e. to the analysis of the onset of convection in dual porosity materials, called bi-disperse porous media. A bi-disperse porous medium (BDPM) is a material characterized by two types of pores called macropores - with porosity $\phi$ - and micropores - with porosity $\epsilon$. Therefore, $(1- \phi) \epsilon$ is the fraction of volume occupied by the micropores, $\phi + (1- \phi) \epsilon$ is the fraction of volume occupied by the fluid, $(1- \epsilon)(1-\phi)$ is the fraction of volume occupied by the solid skeleton. In particular, the macropores are referred to as f-phase (fractured phase), while the remainder of the structure is referred to as p-phase (porous phase). The first refined mathematical model describing the onset of bi-disperse convection was proposed by Nield and Kuznetsov in \cite{NK2005-1, NK2005, NK2006}, in those papers the authors extended the Brinkman model to the case of a bi-disperse porous medium and analysed the onset of convection in a horizontal layer of BDPM heated from below. Hence, taking into consideration the discussion and the analysis made in \cite{NK2005-1, NK2005, NK2006}, in the present paper a Brinkman BDPM is considered, i.e. the Brinkman law is employed to derive both macro- and micro-momentum equations. Moreover, Nield and Kuznetsov proved that the critical Rayleigh number for the onset of bi-disperse convection is higher if compared to the critical Rayleigh number for the single porosity case, so dual porosity materials are better suited for insulation problems and thermal management problems. Thereby, this kind of materials offers a lot of variegated possibilities to design man-made materials for heat transfer problems, this is the reason why a theoretical analysis of the onset of bi-disperse convection under various physical assumptions is essential. \\ 
The problem of the onset of convection in rotating clear fluids and porous materials finds a large number of practical applications: food process industry, chemical process industry, centrifugal filtration processes and other rotating machineries (see \cite{chandr, Vadasz, Vadasz2, massa1, massa3} and the references therein). To obtain even more useful results for the above applications, a salt dissolved in the fluid can be considered, so simultaneous heat and mass fluxes take place in the layer \cite{mulone2, BSsale}.

Envisaging a rotating machinery constituted by an engineered anisotropic bi-disperse porous material \cite{brian1, brian3}, in this paper the onset of bi-disperse double-diffusive convection will be analysed: we assume that the rotating horizontal layer heated from below of anisotropic BDPM is filled by an incompressible fluid binary mixture. The paper is organized as follows. In Section \ref{model} the mathematical model is presented and the equations governing the evolutionary behaviour of the perturbation to the thermal conduction solution are derived. In Section \ref{linear} linear instability analysis is performed to find the instability thresholds for the onset of steady and oscillatory double-diffusive convection. In Section \ref{num} the instability thresholds are numerically analysed in order to display the influence of the fundamental physical parameters on the onset of convection. Section \ref{concl} is a concluding section that summarises all the obtained results.

\section{Mathematical model}\label{model}
Let us consider a reference frame $Oxyz$ with fundamental unit vectors ${\bf i},{\bf j},{\bf k}$ (${\bf k}$ pointing vertically upward) and $L=\R^2 \times [0,d]$ a horizontal layer occupied by a bi-disperse porous medium saturated by an incompressible fluid binary mixture at rest state and uniformly heated from below. The layer $L$ rotates about the vertical axis $z$, with constant angular velocity ${\bf \Omega}=\Omega {\bf k}$. Let us assume that there is local thermal equilibrium between the f-phase and the p-phase, i.e. $T^f=T^p=T$ \cite{BSGentile}. Moreover, the fluid-saturated bi-disperse porous medium is \textit{horizontally isotropic}.
Let the axes $(x,y,z)$ be the \textit{principal axes} of the permeability, so the macropermeability tensor and the micropermeability tensor are

$$ \begin{aligned} {\bf K}^f &= \text{diag}(K^f_x,K^f_y,K^f_z)=K^f_z \ {\bf K}^{f*}, \\ {\bf K}^p &= \text{diag}(K^p_x,K^p_y,K^p_z)=K^p_z \ {\bf K}^{p*}, \end{aligned}$$
$${\bf K}^{f*}=\text{diag}(k,k,1),\quad
{\bf K}^{p*}=\text{diag}(h,h,1)$$
where
$$k=\frac{K^f_x}{K^f_z}=\frac{K^f_y}{K^f_z}, \quad h=\frac{K^p_x}{K^p_z}=\frac{K^p_y}{K^p_z}.$$
To derive the governing equations, a Boussinesq approximation is employed: the density is constant except in the buoyancy forces due to the gravity ${\bf g}=-g {\bf k}$, where it has a linear dependence on temperature and concentration fields, i.e.
$$\rho=\rho_F[1-\alpha(T-T_0)+\alpha_C(C-C_0)],$$
$\alpha$ and $\alpha_C$ being the thermal and the salt expansion coefficient, respectively, while $\rho_F$ is reference constant density. \\
Extending the Brinkman model in order to taking into account the Coriolis terms due to the uniform rotation of the layer about $z$ for the micropores and the macropores, the governing system is \cite{NK2005, BSsale, coriolis}:
\begin{equation} \label{sist1}\!\!\!\!
\begin{cases}  {\bf v}^f\! =\! \mu^{-1} {\bf K}^f \! \cdot \! \Bigl [ - \zeta ({\bf v}^f\! -\! {\bf v}^p) \!-\! \nabla p^f \! + \! \rho_F \alpha g T {\bf k} \! - \! \rho_F \alpha_C g C {\bf k} \! -\! \dfrac{2 \rho_F \Omega}{\phi} {\bf k} \times {\bf v}^f \!+\! \tilde{\mu}_f \Delta {\bf v}^f \Bigr], \\  {\bf v}^p \!=\! \mu^{-1} {\bf K}^p \! \cdot \! \Bigl[ - \zeta ({\bf v}^p    \!-\! {\bf v}^f) \!-\! \nabla p^p \! + \! \rho_F \alpha g T {\bf k} \! - \! \rho_F \alpha_C g C {\bf k} \!-\! \dfrac{2 \rho_F \Omega}{\epsilon} {\bf k} \times {\bf v}^p \!+\! \tilde{\mu}_p \Delta {\bf v}^p \Bigr],  \\  \nabla \cdot {\bf v}^f = 0, \\  \nabla \cdot {\bf v}^p = 0, \\  (\rho c)_m T_{,t} + (\rho c)_F ({\bf v}^f + {\bf v}^p) \cdot \nabla T = k_m \Delta T, \\ \epsilon_1 \dfrac{\partial C}{\partial t} +  ({\bf v}^f + {\bf v}^p) \cdot \nabla C = \epsilon_2 \Delta C  \end{cases}
\end{equation}
where 
$$p^s=P^s-\frac{\rho_F}{2} \vert {\bf \Omega} \times {\bf x} \vert^2, \quad s=f,p$$ are the reduced pressures, with ${\bf x}=(x,y,z)$, ${\bf v}^s$ is the seepage velocity for $s=\{f,p\}$, $T$ and $C$ are the temperature and concentration fields, $\zeta$ is an interaction coefficient between the f-phase and the p-phase, $\mu$ is the fluid viscosity, $c$ is the specific heat, $k_s$ is the thermal conductivity for $s=\{f,p\}$, $k^s_C$ = salt diffusivity for $s=\{f,p\}$, 
$$\begin{aligned}
&(\rho c)_m=(1-\phi)(1-\epsilon)(\rho c)_{sol}+\phi (\rho c)_f+\epsilon (1-\phi)(\rho c)_p, \\ &k_m=(1-\phi)(1-\epsilon)k_{sol}+\phi k_f+\epsilon (1-\phi)k_p, \\ &\epsilon_1=\phi+\epsilon(1-\phi), \ \epsilon_2=\phi k_C^f + \epsilon (1-\phi) k_C^p, \end{aligned}$$ 
(the subscripts $sol$ and $m$ are referred to the solid skeleton and to the medium). Since we are considering a single temperature BDPM and since macropores and micropores are saturated by the same mixture, we expect that $(\rho c)_f=(\rho c)_p=(\rho c)_F$, hence $(\rho c)_m=(1-\phi)(1-\epsilon)(\rho c)_{sol}+[\phi +\epsilon (1-\phi)](\rho c)_F$ \cite{fals-mul-BS}. \\
\noindent To (\ref{sist1})
the following boundary conditions are appended
\begin{equation} \label{BC1}
\begin{array}{l}
 {\bf v}^s \cdot {\bf n}=0\,, \quad s=\{f,p\}\qquad\mbox{on}\quad z=0,d,\\[2mm]
 T=T_L\,, \quad\mbox{on}\quad z=0\,,\qquad T=T_U\,, \quad\mbox{on}\quad z=d \\[2mm] C=C_L\,, \quad\mbox{on}\quad z=0\,,\qquad C=C_U\,, \quad\mbox{on}\quad z=d
\end{array}
\end{equation}
where ${\bf n}$ is the unit outward normal to the impermeable horizontal planes delimiting the layer and $T_L>T_U$, $C_L>C_U$, the layer being uniformly and simultaneously heated and salted from below. \\
\noindent
The problem $(\ref{sist1})$-$(\ref{BC1})$ admits
the stationary motionless solution (thermal conduction solution):
$${\bf \overline{v}}^f={\bf 0},\quad {\bf \overline{v}}^p={\bf 0}, \quad  \overline{T}=- \beta z + T_L, \quad  \overline{C}=- \beta_C z + C_L,$$
where $\beta=\dfrac{T_L-T_U}{d}$ is the temperature gradient, while $\beta_C=\dfrac{C_L-C_U}{d}$ is the concentration gradient.
Let us introduce a generic perturbation $\{ {\bf u}^f, {\bf u}^p, \theta, \gamma, \pi^f, \pi^p \}$ to the steady conduction solution, hence the evolutionary equations governing the perturbation fields are:
\begin{equation}\!\!\!\!
\begin{cases}  {\bf u}^f\! =\! \mu^{-1} {\bf K}^f \!\cdot\! \Bigl[ -\! \zeta ({\bf u}^f \!-\! {\bf u}^p) \!-\! \nabla \pi^f \!+\! \rho_F \alpha g \theta {\bf k} \!-\! \rho_F \alpha_C g \gamma {\bf k} \!-\! \dfrac{2 \rho_F \Omega}{\phi} {\bf k} \times {\bf u}^f \!+\! \tilde{\mu}_f \Delta {\bf u}^f \Bigr], \\  {\bf u}^p \!=\! \mu^{-1} {\bf K}^p \!\cdot\! \Bigl[ - \!\zeta ({\bf u}^p \!-\! {\bf u}^f)\! -\! \nabla \pi^p \!+\! \rho_F \alpha g \theta {\bf k} \!-\! \rho_F \alpha_C g \gamma {\bf k} \! -\! \dfrac{2 \rho_F \Omega}{\epsilon} {\bf k} \times {\bf u}^p \!+\! \tilde{\mu}_p \Delta {\bf u}^p \Bigr],  \\  \nabla \cdot {\bf u}^f = 0, \\  \nabla \cdot {\bf u}^p = 0, \\  (\rho c)_m \theta_{,t} + (\rho c)_F ({\bf u}^f + {\bf u}^p) \cdot \nabla \theta = (\rho c)_F \beta (w^f+w^p) + k_m \Delta \theta, \\ \epsilon_1 \dfrac{\partial \gamma}{\partial t} +  ({\bf u}^f + {\bf u}^p) \cdot \nabla \gamma = \beta_C (w^f+w^p) + \epsilon_2 \Delta \gamma \end{cases} \label{pertsist}
\end{equation}
where ${\bf u}^f=(u^f,v^f,w^f),\,{\bf u}^p=(u^p,v^p,w^p)$. To derive the dimensionless perturbed system, let us introduce the non-dimensional parameters
$$
\! {\bf x}^{*} = \dfrac{{\bf x}}{d}, \ t^{*}=\dfrac{t}{\tilde{t}}, \ \theta^{*} = \dfrac{\theta}{\tilde{T}}, \ \gamma^{*} = \dfrac{\gamma}{{\tilde{C}}}, \ {\bf u}^{s*} = \dfrac{{\bf u}^s}{\tilde{u}}, \ \pi^{s*} = \dfrac{\pi^s}{\tilde{P}}, \quad \text{for} \ s=\{f,p\},$$
$$
\eta=\dfrac{\phi}{\epsilon}, \ \sigma=\dfrac{\tilde{\mu}_p}{\tilde{\mu}_f}, \ \gamma_1=\dfrac{\mu}{K^f_z \zeta}, \ \gamma_2=\dfrac{\mu}{K^p_z \zeta}, \ A=\dfrac{(\rho c)_m}{(\rho c)_F}, 
$$
where the scales are given by
$$\tilde{u}=\frac{k_m}{(\rho c)_f d}, \ \tilde{t} = \frac{d^2 (\rho c)_m}{k_m}, \ \tilde{P} = \frac{\zeta k_m}{(\rho c)_f}, \ \tilde{T} = \sqrt{\frac{\beta k_m \zeta}{(\rho c)_f \rho_F \alpha g}}, \ \tilde{C} = \dfrac{k_m}{(\rho c)_F}\sqrt{\dfrac{\beta_C \zeta}{\epsilon_2 \rho_F \alpha_C g}}, $$
 and define the Lewis number $Le$, the Taylor number $\T$, the Darcy number $Da_f$, the thermal Rayleigh number $\Ra$, the chemical Rayleigh number $\CC$,
$$ Le=\dfrac{k_m}{\epsilon_2 (\rho c)_m}, \quad \T = \dfrac{2 \rho_F \Omega K^f_z}{\phi \mu}, \quad Da_f=\dfrac{\tilde{\mu}_f K^f_z}{d^2 \mu}, \quad \Ra = \sqrt{\dfrac{\beta d^2 (\rho c)_f \rho_F \alpha g}{k_m \zeta}}, \quad \CC = \sqrt{\dfrac{\beta_C d^2 \rho_f \alpha_C g}{\epsilon_2 \zeta}},$$
respectively. The resulting non-dimensional perturbation equations, dropping all the asterisks, are
\begin{equation}
\begin{cases}  \gamma_1 ({\bf K}^f)^{-1} {\bf u}^f\! +\! ({\bf u}^f \!-\! {\bf u}^p) \!=\! -\! \nabla \pi^f \!+\! \Ra \theta {\bf k} \!-\! \CC \gamma \textbf{k} \!-\! \gamma_1 \mathcal{T} {\bf k} \times {\bf u}^f \!+\! Da_f \gamma_1 \Delta {\bf u}^f, \\  \gamma_2 ({\bf K}^p)^{-1} {\bf u}^p \!-\! ({\bf u}^f \!-\! {\bf u}^p) \!=\! -\! \nabla \pi^p \!+\! \Ra \theta {\bf k} \!-\! \CC \gamma \textbf{k} \!-\! \eta \gamma_1 \mathcal{T} {\bf k} \times {\bf u}^p \!+\! Da_f \gamma_1 \sigma \Delta {\bf u}^p, \\  \nabla \cdot {\bf u}^f = 0, \\  \nabla \cdot {\bf u}^p = 0, \\  \theta_{,t} + ({\bf u}^f + {\bf u}^p) \cdot \nabla \theta = \Ra (w^f+w^p) +\Delta \theta, \\ \epsilon_1 Le \dfrac{\partial \gamma}{\partial t} + A \ Le ({\bf u}^f + {\bf u}^p) \cdot \nabla \gamma = \CC (w^f+w^p) + \Delta \gamma \end{cases} \label{pertubations}
\end{equation}
under the initial conditions $$\mathbf{u}^s(\mathbf{x},0)=\mathbf{u}^s_0(\mathbf{x})\,,\qquad \phi^s(\mathbf{x},0)=\phi_0(\mathbf{x})\,,\qquad \theta(\mathbf{x},0)=\theta_0(\mathbf{x})$$ with $\nabla\cdot\mathbf{u}_0^s=0,\, s=\{f,p\}$, and the stress-free boundary conditions \cite{chandr}
\begin{equation} \label{BOUNDCOND}
u^f_{,z}=v^f_{,z}=u^p_{,z}=v^p_{,z}=w^f=w^p=\theta=0 \quad \text{on} \ z=0,1.
\end{equation}

\begin{rem}
According to experimental results, let us assume the perturbation fields being periodic functions in the horizontal directions $x,y$ of period $2 \pi / l$ and $2 \pi / m$, respectively, and let us denote by $$V=\Big[ 0,\frac{2 \pi}{l} \Big] \times \Big[ 0,\frac{2 \pi}{m} \Big] \times [0,1]$$ the periodicity cell. Moreover, let us assume that $\forall f \in  \{\nabla \pi^s,u^s,v^s,w^s,\theta,\gamma\}$ for $s=\{f,p\}$, $f \in W^{2,2}(V) \ \forall t \in \R^{+}$.
\end{rem}

\section{Onset of convection}\label{linear}
To determine the linear instability threshold for the onset of double diffusive convection, we linearise system (\ref{pertubations}) and seek for solutions ${\bf u}^f, {\bf u}^p, \theta, \gamma, \pi^f, \pi^p$ with time dependence like $e^{\bar\sigma t}$:
\begin{equation}
\begin{cases}  \gamma_1 ({\bf K}^f)^{-1} {\bf u}^f \!+\! ({\bf u}^f \!-\! {\bf u}^p) \!=\! -\! \nabla \pi^f \!+\! \Ra \theta {\bf k} \!-\! \CC \gamma \textbf{k} \!-\! \gamma_1 \mathcal{T} {\bf k} \times {\bf u}^f \!+\! Da_f \gamma_1 \Delta {\bf u}^f, \\  \gamma_2 ({\bf K}^p)^{-1} {\bf u}^p \!-\! ({\bf u}^f \!-\! {\bf u}^p) \!=\! -\! \nabla \pi^p \!+\! \Ra \theta {\bf k} \!-\! \CC \gamma \textbf{k} \!-\! \eta \gamma_1 \mathcal{T} {\bf k} \times {\bf u}^p \!+\! Da_f \gamma_1 \sigma \Delta {\bf u}^p, \\  \bar{\sigma} \theta = \Ra (w^f+w^p) +\Delta \theta \\ \epsilon_1 Le \bar{\sigma} = \CC (w^f+w^p) + \Delta \gamma  \end{cases} \label{pertubations2}
\end{equation}
Let us denote by
$$
\begin{array}{l}\Delta_1 f=f_{,xx}+f_{,yy}\,, \quad \Delta^m \equiv	\underbrace{ \Delta \Delta \cdots \Delta }_{m}\,, \qquad \omega^s_3=(\nabla \times {\bf u}^s) \cdot {\bf k},\,\, s=\{f,p\} \\ \qquad\qquad\qquad\qquad\bar a=\dfrac{\gamma_1}{k}+1\,,\quad \bar b=\dfrac{\gamma_2}{h}+1 \end{array}$$ 
and define the following operators 
\begin{equation}
A\equiv\bar{a}-Da_f \gamma_1 \Delta\,, \qquad B\equiv\bar{b}-Da_f \sigma \gamma_1 \Delta\,, \qquad \Psi\equiv(AB-1). \label{operatori}
\end{equation}
We compute the third components of curl and of double curl of
 $(\ref{pertubations})_{1,2}$, respectively given by
\begin{equation}\label{curl}
\begin{cases}
 A\omega_3^f -\omega_3^p=\gamma_1\T w_{,z}^f,\\
 -\omega_3^f+B\omega_3^p=\eta\gamma_1\T w_{,z}^p
 \end{cases}
\end{equation} 
 and
\begin{equation}
\mkern-18mu \begin{cases}
\! - \dfrac{\gamma_1}{k} w^f_{,zz} \! - \! \gamma_1 \Delta_1 w^f \! - \! \Delta w^f \! + \! \Delta w^p \! \! = \! \! - \! \Ra \Delta_1 \theta \!+\! \CC \Delta_1 \gamma, \! + \! \gamma_1 \T \omega^f_{3,z} \! - \! Da_f \gamma_1 \Delta^2 w^f \! , \\[2mm]
\! - \dfrac{\gamma_2}{h} w^p_{,zz} \! - \! \gamma_2 \Delta_1 w^p \! + \! \Delta w^f \! - \! \Delta w^p \! \! = \! \! - \! \Ra \Delta_1 \theta \!+\! \CC \Delta_1 \gamma, \! + \! \eta \gamma_1 \T \omega^p_{3,z} \! - \! Da_f \gamma_1 \sigma \Delta^2 w^p \! .
\end{cases}  \label{doublecurl}
\end{equation}
Applying the operator $B$ to (\ref{curl})$_1$, by virtue of (\ref{curl})$_2$, we obtain
$$
\Psi\omega_3^f=\gamma_1\T Bw_{,z}^f+\eta\gamma_1\T w_{,z}^p.
$$
This equation, together with that one obtained by applying the operator $\Psi$ to (\ref{curl})$_2$, leads to
\begin{equation} \label{rotore}
\begin{cases}
 \Psi\omega_3^f=\gamma_1\T B w_{,z}^f+\eta \gamma_1\T w_{,z}^p,\\
 \\
 \Psi B \omega_3^p=\gamma_1 \T B w_{,z}^f+\eta\gamma_1 \T AB w_{,z}^p.
\end{cases}
\end{equation}
Applying the operator $\Psi$
 to (\ref{doublecurl})$_1$ and $\Psi B$ to (\ref{doublecurl})$_2$, we obtain
\begin{equation}\label{abovesyst}
\begin{cases}  -\bar{a} \Psi w^f_{,zz} - \hat{\gamma}_1 \Psi \Delta_1 w^f + \Psi \Delta_1 w^p + \Psi w^p_{,zz} = \\  \qquad - \Ra \Psi \Delta_1 \theta +\CC \Psi  \Delta_1 \gamma + \gamma_1 \T \Psi \omega^f_{3,z} -Da_f \gamma_1 \Psi \Delta^2 w^f, \\[2mm] -\bar{b} \Psi B w^p_{,zz} -\hat{\gamma_2} \Psi B \Delta_1 w^p + \Psi B \Delta_1 w^f +\Psi B w^f_{,zz} = \\ \qquad - \Ra \Psi B \Delta_1 \theta + \CC \Psi B \Delta_1 \gamma + \eta \gamma_1 \T \Psi B \omega^p_{3,z} -Da_f \sigma \gamma_1 \Psi B \Delta^2 w^p, \
\end{cases}
\end{equation}
with $\hat{\gamma}_r=\gamma_r+1$, for $r=1,2$.
 \\
In view of $(\ref{rotore})$,  $(\ref{abovesyst})$ can be written as
\begin{equation}
\begin{cases}
[ -\bar{a}\Psi -(\gamma_1 \T)^2 B ] w^f_{,zz} - \hat{\gamma}_1 \Psi \Delta_1 w^f + \Psi \Delta_1 w^p + \\ \quad [ \Psi-\eta (\gamma_1 \T)^2] w^p_{,zz} +Da_f \gamma_1 \Psi \Delta^2 w^f= - \Ra \Psi \Delta_1 \theta+\CC \Psi \Delta_1 \gamma, \\[2mm] [ -\bar{b} \Psi B -(\eta \gamma_1 \T)^2 AB ] w^p_{,zz} - \hat{\gamma}_2 \Psi B \Delta_1 w^p + \Psi B \Delta_1 w^f + \\ \quad [ \Psi B-\eta (\gamma_1 \T)^2 B] w^f_{,zz} +Da_f \sigma \gamma_1 \Psi B \Delta^2 w^p= - \Ra \Psi B \Delta_1 \theta +\CC \Psi B \Delta_1 \gamma.
\end{cases} \label{finalPert}
\end{equation}
Consequently, we consider $(\ref{pertubations2})_{3,4}$, $(\ref{finalPert})_1$ and $(\ref{finalPert})_2$, i.e.:
\begin{equation}
\begin{cases} [ -\bar{a}\Psi -(\gamma_1 \T)^2 B ] w^f_{,zz} - \hat{\gamma}_1 \Psi \Delta_1 w^f + \Psi \Delta_1 w^p + \\ \quad [ \Psi - \eta (\gamma_1 \T)^2 ] w^p_{,zz} +Da_f \gamma_1 \Psi \Delta^2 w^f= - \Ra \Psi \Delta_1 \theta +\CC \Psi \Delta_1 \gamma , \\[3mm] [ -\bar{b} \Psi B -(\eta \gamma_1 \T)^2 AB ] w^p_{,zz} - \hat{\gamma}_2 \Psi B \Delta_1 w^p + \Psi B \Delta_1 w^f + \\ \quad [ \Psi B-\eta (\gamma_1 \T)^2 B] w^f_{,zz} +Da_f \sigma \gamma_1 \Psi B \Delta^2 w^p= - \Ra \Psi B \Delta_1 \theta +\CC \Psi B \Delta_1 \gamma, \\[3mm] \bar{\sigma} \theta= \Ra (w^f+w^p) +\Delta \theta, \\[3mm] \epsilon_1 Le \bar{\sigma} = \CC (w^f+w^p) + \Delta \gamma.
\end{cases} \label{sistema}
\end{equation}
Let us employ normal modes solutions in (\ref{sistema}) \cite{chandr}:
\begin{equation}
\begin{aligned} w^f &= W^f_0 \sin(n \pi z) e^{i(lx+my)}, \\ w^p &= W^p_0 \sin(n \pi z) e^{i(lx+my)}, \\ \theta &= \Theta_0 \sin(n \pi z) e^{i(lx+my)}, \\ \gamma &= \Gamma_0 \sin(n \pi z) e^{i(lx+my)},
\end{aligned}\label{17}
\end{equation}
$W^f_0,W^p_0,\Theta_0,\Gamma_0$ being real constants, so from (\ref{sistema}) it turns out that
\begin{equation}\label{24} \mkern-18mu \begin{cases} \Big[\! \Lambda_n e (A_1 M \!+\! \sigma f n^2 \pi^2) \!+\! \Lambda^2_n e (M e \sigma \!+\! B_1) \!+\! f \bar{b} n^2 \pi^2 \!+\! B_1 M  \!+\! \\ \ \ e^2 \Lambda_n^3 A_1 \!+\! e^3 \sigma \Lambda_n^4 \Big] \! W^f_0 \!+\! \Big[ \!-\! B_1 \Lambda_n \!-\! e A_1 \Lambda_n^2 \!-\! e^2 \sigma \Lambda_n^3 \!+\! \eta f n^2 \pi^2 \Big] \! W^p_0  \\ \ \ -\Ra a^2 \Big[ B_1 + e \Lambda_n A_1 +e^2 \sigma \Lambda_n^2 \Big] \Theta_0 
+ \CC a^2 \Big[ B_1 + e \Lambda_n A_1 +e^2 \sigma \Lambda_n^2 \Big] \Gamma_0 =0, \\ \\ \Big[ \Lambda_n (e \sigma n^2 \pi^2 \eta f - \bar{b} B_1)+\eta f n^2 \pi^2 \bar{b} -\Lambda_n^2 e C  \\ \ \ -\Lambda_n^3 e^2 \sigma (A_1+\bar{b}) -\Lambda_n^4 \sigma^2 e^3 \Big] W^f_0 + \\ \ \ \Big\{ \Lambda_n e (CN+\eta^2 f A_1 n^2 \pi^2) + \Lambda_n^2 e \sigma [e(A_1+\bar{b})N+\bar{b}B_1 + e \eta^2 f n^2 \pi^2]+\\ \ \ B_1 \bar{b} N +e^2 \Lambda_n^3 \sigma (C+e \sigma N)+\Lambda_n^4 e^3 \sigma^2 (A_1+\bar{b}) +\Lambda_n^5 e^4 \sigma^3+ \\ \ \ \eta^2 \! f n^2 \pi^2 \bar{a} \bar{b}  \Big\} W^p_0 \!-\! R a^2 \Big[ \bar{b} B_1 \!+\! e \Lambda_n C \!+\! \Lambda_n^2 e^2 \sigma (A_1 \!+\! \bar{b}) \!+\! e^3 \sigma^2 \Lambda_n^3 \Big]\! \Theta_0 \\ \ \ 
+ \CC a^2 \Big[ \bar{b} B_1 \!+\! e \Lambda_n C \!+\! \Lambda_n^2 e^2 \sigma (A_1 \!+\! \bar{b}) \!+\! e^3 \sigma^2 \Lambda_n^3 \Big]\! \Gamma_0 \!=\! 0, \\ \\ \Ra W^f_0+ \Ra W^p_0 - (\Lambda_n+\bar{\sigma}) \Theta_0=0, \\ \\ \CC W^f_0+ \CC W^p_0 - (\Lambda_n+ \epsilon_1 Le \bar{\sigma}) \Gamma_0=0, \end{cases}
\end{equation}
where $a^2=l^2+m^2$ and $\Lambda_n=a^2+n^2 \pi^2$, while
\begin{equation} \label{positions}
\begin{array}{l} A_1 = \sigma \bar a +\bar b, \,\,\, B_1 = \dfrac{\gamma_1}{k} \dfrac{\gamma_2}{h} + \dfrac{\gamma_1}{k} +\dfrac{\gamma_2}{h}, \,\,\, C =  \sigma (2 B_1 + 1) + {\bar b}^2, \\ \\ M = \dfrac{\gamma_1}{k} n^2 \pi^2 + \gamma_1 a^2 + \Lambda_n, \,\,\, N = \dfrac{\gamma_2}{h} n^2 \pi^2 + \gamma_2 a^2 + \Lambda_n, \\ \\ e = Da_f \gamma_1, \,\,\,\, f = (\gamma_1 \T)^2. \end{array}
\end{equation}
Setting
$$\begin{aligned} h_{11} =& \Lambda_n e (A_1 M+\sigma f n^2 \pi^2) + \Lambda^2_n e (M e \sigma +B_1) + f \bar{b} n^2 \pi^2 + B_1 M + \\ & e^2 \Lambda_n^3 A_1 +e^3 \sigma \Lambda_n^4, \\ h_{12} =& -B_1 \Lambda_n -e A_1 \Lambda_n^2 -e^2 \sigma \Lambda_n^3 +\eta f n^2 \pi^2, \\ h_{13} =& B_1 + e \Lambda_n A_1 +e^2 \sigma \Lambda_n^2, \\ h_{21} =& \Lambda_n (e \sigma n^2 \pi^2 \eta f - \bar{b} B_1)+\eta f n^2 \pi^2 \bar{b} -\Lambda_n^2 e C -\Lambda_n^3 e^2 \sigma (A_1+\bar{b}) -\Lambda_n^4 \sigma^2 e^3, \\ h_{22} =& \Lambda_n e (CN+\eta^2 f A_1 n^2 \pi^2) + \Lambda_n^2 e \sigma [e(A_1+\bar{b})N+\bar{b}B_1 + e \eta^2 f n^2 \pi^2]+ \\ & B_1 \bar{b} N+ e^2 \Lambda_n^3 \sigma (C+e \sigma N)+\Lambda_n^4 e^3 \sigma^2 (A_1+\bar{b}) +\Lambda_n^5 e^4 \sigma^3 +\eta^2 f n^2 \pi^2 \bar{a} \bar{b}, \\ h_{23} =& \bar{b} B_1 + e \Lambda_n C + \Lambda_n^2 e^2 \sigma (A_1+\bar{b}) + e^3 \sigma^2 \Lambda_n^3,
\end{aligned}$$
(\ref{24}) can be written as
\begin{equation}
\begin{cases} h_{11} W^f_0  +h_{12} W^p_0 - \Ra a^2 h_{13} \Theta_0 + \CC a^2 h_{13} \Gamma_0 = 0, \\ h_{21} W^f_0  +h_{22} W^p_0 - \Ra a^2 h_{23} \Theta_0 + \CC a^2 h_{23} \Gamma_0 = 0, \\ \Ra W^f_0 + \Ra W^p_0 - (\Lambda_n+\bar{\sigma}) \Theta_0 = 0, \\ \CC W^f_0+ \CC W^p_0 - (\Lambda_n+ \epsilon_1 Le \bar{\sigma}) \Gamma_0=0. \end{cases} \label{finale}
\end{equation}
Requiring zero determinant for system \eqref{finale}, we get:
\begin{equation}\label{PES1}
\Ra^2=\dfrac{\Lambda_n+\bar{\sigma}}{a^2} \dfrac{h_{11} h_{22} - h_{12} h_{21}}{h_{13} h_{22} - h_{12} h_{23} + h_{11} h_{23} - h_{21} h_{13}} + \CC^2 \dfrac{\Lambda_n+\bar{\sigma}}{\Lambda_n+\epsilon_1 Le \bar{\sigma}} \,.
\end{equation}
The growth rate is $\bar{\sigma}=\sigma_R + i \sigma_I$, so \eqref{PES1} is 
\begin{equation} \label{PES2}
\Ra^2 = Re(\Ra^2) + i \ Im(\Ra^2),
\end{equation}
where the real part and the imaginary part are respectively given by
\begin{equation}
\begin{aligned}
\!\! Re(\Ra^2) &\!=\!  \dfrac{ (\Lambda_n \!+\! \sigma_R) (h_{11} h_{22} \!-\! h_{12} h_{21})}{a^2(h_{12} h_{23} \!-\! h_{13} h_{22}\!-\! h_{11} h_{23} \!+\! h_{21} h_{13})} \!+\! \CC^2 \dfrac{(\Lambda_n + \sigma_R) (\Lambda_n + \epsilon_1 Le \sigma_R) + \epsilon_1 Le \sigma_I^2}{(\Lambda_n + \epsilon_1 Le \sigma_R)^2 + (\epsilon_1 Le \sigma_I)^2}, \\[2mm] \!\! Im(\Ra^2) &\!=\! \sigma_I \Bigl[ \dfrac{ h_{11} h_{22} \!-\! h_{12} h_{21}}{a^2(h_{12} h_{23} \!-\! h_{13} h_{22} \!-\! h_{11} h_{23} \!+\! h_{21} h_{13})} \!+\! \CC^2 \dfrac{\Lambda_n (1-\epsilon_1 Le)}{(\Lambda_n + \epsilon_1 Le \sigma_R)^2 + (\epsilon_1 Le \sigma_I)^2} \Bigr].
\end{aligned}
\end{equation}

\begin{thm}
If $\epsilon_1 Le \leq 1$, {\it the strong form of the principle of exchange of stability holds}, i.e. oscillatory convection cannot arise. 
\end{thm}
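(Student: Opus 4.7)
The strategy is to exploit the fact that $\Ra^2$ is a physical parameter, hence real, so $Im(\Ra^2) = 0$. Using the explicit formula for $Im(\Ra^2)$ displayed just before the statement, this identity factors as
\[
\sigma_I \left[ \dfrac{h_{11}h_{22} - h_{12}h_{21}}{a^2(h_{12}h_{23} - h_{13}h_{22} - h_{11}h_{23} + h_{21}h_{13})} + \CC^2 \dfrac{\Lambda_n(1 - \epsilon_1 Le)}{(\Lambda_n + \epsilon_1 Le \sigma_R)^2 + (\epsilon_1 Le \sigma_I)^2} \right] = 0.
\]
Oscillatory convection would require $\sigma_I \neq 0$, so the objective is to show that under the hypothesis $\epsilon_1 Le \leq 1$ the bracketed expression is strictly positive; this forces $\sigma_I = 0$ and yields the strong form of the principle of exchange of stability.

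\textbf{Main steps.} First, the hypothesis $\epsilon_1 Le \leq 1$ makes $(1 - \epsilon_1 Le) \ge 0$, so the $\CC^2$-summand in the bracket is non-negative for every admissible $(\sigma_R, \sigma_I)$. Second, the remaining summand depends only on the spatial mode $(a^2, n^2 \pi^2)$ and on the positive physical parameters; it is precisely the term that produces the stationary critical Rayleigh number when one sets $\bar\sigma = 0$ in \eqref{PES1}. Since the model admits a positive stationary instability threshold, this first summand must be strictly positive. Putting the two contributions together, the bracket is strictly positive, and the factored identity $Im(\Ra^2) = 0$ forces $\sigma_I = 0$, which is what the theorem asserts.

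\textbf{Anticipated obstacle.} The delicate point is the rigorous positivity of the first summand. The quantities $h_{11}, h_{13}, h_{22}, h_{23}$ are sums of products of non-negative physical quantities ($\Lambda_n$, $n^2\pi^2$, $e$, $\bar a$, $\bar b$, $A_1$, $B_1$, $M$, $N$, $\sigma$, $\eta$) as defined in \eqref{positions}, so they are manifestly positive. However, the Coriolis contribution $f = (\gamma_1 \T)^2$ enters $h_{12}$ and $h_{21}$ with a sign opposite to the remaining terms, so neither $h_{12}$ nor $h_{21}$ has a definite sign a priori, and the sign of $h_{11}h_{22} - h_{12}h_{21}$ and of the denominator is not transparent. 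The cleanest route I would take is to identify this rational expression with the purely thermal ($\CC = 0$) stationary Rayleigh number squared obtained from \eqref{PES1} at $\bar\sigma = 0$, and invoke its positivity on physical grounds. An alternative but heavier route is to expand the numerator and the denominator, regroup the Coriolis cross terms, and verify monomial by monomial that what remains is a sum of positive contributions; this bookkeeping is the main computational burden.
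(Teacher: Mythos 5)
Your proposal follows essentially the same route as the paper: since $\Ra^2$ is real its imaginary part must vanish, and after factoring out $\sigma_I$ the bracketed expression is strictly positive when $\epsilon_1 Le \le 1$ (the $\CC^2$-term because $1-\epsilon_1 Le \ge 0$, the thermal term by positivity of $h_{11}h_{22}-h_{12}h_{21}$ and $h_{12}h_{23}-h_{13}h_{22}-h_{11}h_{23}+h_{21}h_{13}$), which forces $\sigma_I=0$. The one point you flag as delicate --- the positivity of those two combinations despite the indefinite signs of $h_{12}$ and $h_{21}$ --- is simply asserted without proof in the paper itself, so your treatment matches the published argument and is, if anything, more explicit about where the remaining verification lies.
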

\begin{proof}
Let us underline that $h_{11} h_{22} - h_{12} h_{21}$ and $h_{12} h_{23} - h_{13} h_{22} - h_{11} h_{23} + h_{21} h_{13}$ are strictly positive. Since $\Ra^2$ is a real number, the imaginary part of \eqref{PES2} has to vanish: 
\begin{equation} \label{PES3}
\sigma_I \Bigl\{ ( h_{11} h_{22} \!-\! h_{12} h_{21}) [(\Lambda_n \!+\! \epsilon_1 Le \sigma_R)^2 \!+\! (\epsilon_1 Le \sigma_I)^2] \!+\!  \CC^2 a^2 (h_{12} h_{23} \!-\! h_{13} h_{22} \!-\! h_{11} h_{23} \!+\! h_{21} h_{13}) \Lambda_n (1 \!-\! \epsilon_1 Le)   \Bigr\} =0.
\end{equation}
Under the assumption $\epsilon_1 Le \leq 1$, from \eqref{PES3} it necessarily follows $\sigma_I=0$, i.e. $\bar{\sigma} \in \mathbb R$
\end{proof}

\begin{rem}\label{remark}
If we confine ourselves to the case of a single component fluid (i.e. for $\CC^2 \rightarrow 0$), we actually recover the model describing the evolutionary behaviour of a fluid-saturated anisotropic Brinkman bi-disperse porous medium, rotating about the vertical axis, see \cite{massa2}. 
In particular, \eqref{PES2} becomes
\begin{equation} \label{PES2-legg}
\!\!\! \Ra^2 \!\!=\!\! \dfrac{ (\Lambda_n \!+\! \sigma_R) (h_{11} h_{22} \!-\! h_{12} h_{21})}{a^2(h_{12} h_{23} \!-\! h_{13} h_{22}\!-\! h_{11} h_{23} \!+\! h_{21} h_{13})} + i \dfrac{ \sigma_I (h_{11} h_{22} \!-\! h_{12} h_{21})}{a^2(h_{12} h_{23} \!-\! h_{13} h_{22} \!-\! h_{11} h_{23} \!+\! h_{21} h_{13})}  \!\!
\end{equation}
therefore
\begin{equation} \label{PES3-legg}
\sigma_I \dfrac{ h_{11} h_{22} - h_{12} h_{21}}{a^2(h_{12} h_{23} - h_{13} h_{22} - h_{11} h_{23} + h_{21} h_{13})} =0,
\end{equation} 
From \eqref{PES3-legg} it follows $\sigma_I=0$, i.e. $\bar{\sigma} \in \mathbb R$ and {\it the strong form of the principle of exchange of stability holds}, under no additional hypotheses. Therefore, when there is no concentration gradient, convection can set in only through stationary motions.  
\end{rem}

\subsection{Steady convection threshold}\label{steady}
The marginal state for stationary convective instabilities is reached for $\bar\sigma=0$ $(\sigma_R=0,\sigma_I=0)$, so from $(\ref{PES1})$ we derive the critical Rayleigh number for the onset of stationary convection:
\begin{equation}\label{staz}
\Ra^2_S=\min_{(n,a^2) \in \N \times \R^{+}} \dfrac{\Lambda_n}{a^2} \dfrac{ h_{11} h_{22} \!-\! h_{12} h_{21}}{h_{12} h_{23} \!-\! h_{13} h_{22}\!-\! h_{11} h_{23} \!+\! h_{21} h_{13}} + \CC^2
\end{equation}
As already pointed out, if we consider a single component fluid (i.e. for $\CC^2 \rightarrow 0$), $(\ref{staz})$ coincides with the instability threshold found in \cite{massa2}.

\subsection{Oscillatory convection threshold}\label{oscill}
The marginal state for oscillatory convection is characterized by $\bar\sigma=i \sigma_I$, $(\sigma_1 \in \R-\{ 0 \}, \sigma_R=0)$, so from $(\ref{PES1})$ and \eqref{PES3} it follows
\begin{equation}\label{det2}
\Ra^2_O = \min_{(n,a^2) \in \N \times \R^{+}} \dfrac{\Lambda_n}{a^2} \dfrac{ h_{11} h_{22} \!-\! h_{12} h_{21}}{h_{12} h_{23} \!-\! h_{13} h_{22}\!-\! h_{11} h_{23} \!+\! h_{21} h_{13}} + \CC^2 \dfrac{\Lambda_n^2 + \epsilon_1 Le \sigma_I^2}{\Lambda_n^2 + (\epsilon_1 Le \sigma_I)^2}
\end{equation}
where the frequency of the oscillations $\sigma_I$ is given by
\begin{equation}
\sigma_I^2= \dfrac{a^2 (h_{12} h_{23} \!-\! h_{13} h_{22}\!-\! h_{11} h_{23} \!+\! h_{21} h_{13}) \Lambda_n \CC^2 (\epsilon_1 Le -1)-\Lambda_n^2 (h_{11} h_{22} \!-\! h_{12} h_{21})}{(\epsilon_1 Le)^2 (h_{11} h_{22} \!-\! h_{12} h_{21})}.
\end{equation}
Therefore, the linear instability threshold for the onset of oscillatory convection is 
\begin{equation}\label{osc}
\Ra^2_O = \min_{(n,a^2) \in \N \times \R^{+}} \dfrac{\Lambda_n}{a^2} \dfrac{ h_{11} h_{22} \!-\! h_{12} h_{21}}{h_{12} h_{23} \!-\! h_{13} h_{22}\!-\! h_{11} h_{23} \!+\! h_{21} h_{13}} \Bigl( 1 + \dfrac{1}{\epsilon_1 Le} \Bigr)+ \dfrac{\CC^2}{\epsilon_1 Le} 
\end{equation}
Let us underline that the relation between the steady and the oscillatory thresholds is given by
\begin{equation}\label{28}
\Ra_O^2=\Ra_S^2 \Bigl( 1 + \dfrac{1}{\epsilon_1 Le} \Bigr) -\CC^2,
\end{equation}
so for increasing $\CC^2$, i.e. for high salt concentrations, convection will arise via oscillatory motions.

\section{Results and Discussion}\label{num}
Due to the complicated algebraic form of the instability thresholds \eqref{staz} and \eqref{osc}, we perform numerical simulations via Matlab software in order to outline \emph{how} rotation, Brinkman model, anisotropy and concentration gradient affect the onset of convection, i.e. to outline the influence of the fundamental parameters $\T^2,Da_f,h,k,\CC^2$ on the steady and oscillatory instability thresholds \eqref{staz} and \eqref{osc}, respectively. In the following simulations, let us fix $\{\eta=0.2,\sigma=0.3,\gamma_1=0.9,\gamma_2=1.8, \epsilon_1 Le=55.924 \}$ (see \cite{BSsale, brian1, brian3, inertia}). \\ 
We numerically obtained that the minimum \eqref{staz} and \eqref{osc} with respect to $n$ is attained at $n=1$ and in Figure \ref{figRSRO} the neutral curves are shown, where we set 
\begin{equation}
\begin{aligned}
f_S^2(a^2) &= \dfrac{\Lambda_1}{a^2} \dfrac{ h_{11} h_{22} \!-\! h_{12} h_{21}}{h_{12} h_{23} \!-\! h_{13} h_{22}\!-\! h_{11} h_{23} \!+\! h_{21} h_{13}} + \CC^2, \\
f_O^2(a^2) &= \dfrac{\Lambda_1}{a^2} \dfrac{ h_{11} h_{22} \!-\! h_{12} h_{21}}{h_{12} h_{23} \!-\! h_{13} h_{22}\!-\! h_{11} h_{23} \!+\! h_{21} h_{13}} \Bigl( 1 + \dfrac{1}{\epsilon_1 Le} \Bigr)+ \dfrac{\CC^2}{\epsilon_1 Le} .
\end{aligned}
\end{equation}
\noindent
In Figures \ref{figT}(a) and \ref{figT}(b) the steady and oscillatory Rayleigh numbers $\Ra^2_S$ and $\Ra_O^2$ are depicted as functions of the Taylor number $\T^2$, we can conclude that the instability thresholds are increasing functions with respect to $\T^2$, so the rotation of the layer has a stabilizing effect on the onset of double-diffusive convection. In particular, the instability thresholds are represented for a low concentration Rayleigh number $\CC^2$ in Figure \ref{figT}(a) and for a high concentration Rayleigh number in Figure \ref{figT}(b): when the concentration gradient in the layer is low, convection sets in via stationary motions, but when the concentration gradient is high, oscillatory convection arises. \\ 
The asymptotic behaviour of the instability thresholds with respect to the Rayleigh number for the salt field is clearly depicted in Figure \ref{figC}: both $\Ra^2_S$ and $\Ra_O^2$ are linear and increasing function of $\CC^2$, so $(i)$ when a salt dissolved at the bottom of the layer is considered, the convection is delayed, $(ii)$ for increasing concentration Rayleigh numbers, double-diffusive convection occurs via oscillatory motions.   \\
In Tables \ref{tab_Da-hk}(a) and \ref{tab_Da-hk}(b) the combined effects that anisotropy and the Brinkman model have on the onset of double-diffusive convection are depicted. In particular, the critical steady and oscillatory Rayleigh numbers $\Ra^2_S$ and $\Ra_O^2$ are shown for increasing quoted values of the Darcy number $Da_f$ when the micropermeability parameter $h$ is lower - Table \ref{tab_Da-hk}(a) - and higher - \ref{tab_Da-hk}(b) - than the macropermeability parameter $k$. Both critical steady and oscillatory Rayleigh numbers increase as the Darcy number increases, i.e. $Da_f$ has a stabilizing effect on the onset of convection. Moreover, for very law $Da_f$, oscillatory convection occurs, while as $Da_f$ increases, there is a switch from oscillatory to steady convection. Let us finally observe that when $h<<k$, the instability thresholds are larger then the ones for the case $h>>k$, so when the micropermeability parameter is larger then the macropermeability parameter, the onset of convection is facilitated. This behaviour is depicted also in Figures \ref{fig_Da-hk}(a) and \ref{fig_Da-hk}(b).

\begin{figure}[h!]
\centering
\includegraphics[scale=0.45]{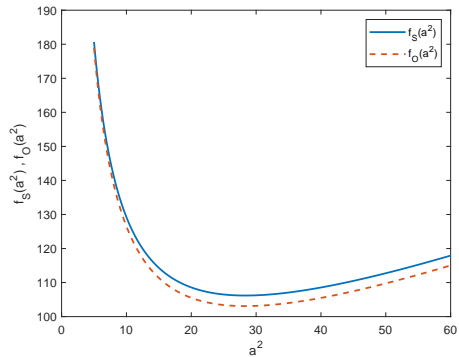}
\caption{Neutral curves for $h=0.1,k=10,\T^2=10,Da=0.001,\CC^2=5$.}
\label{figRSRO}
\end{figure}

\begin{figure}[h!]
\centering
\subfigure[$\CC^2=1.5$]{
\includegraphics[scale=0.45]{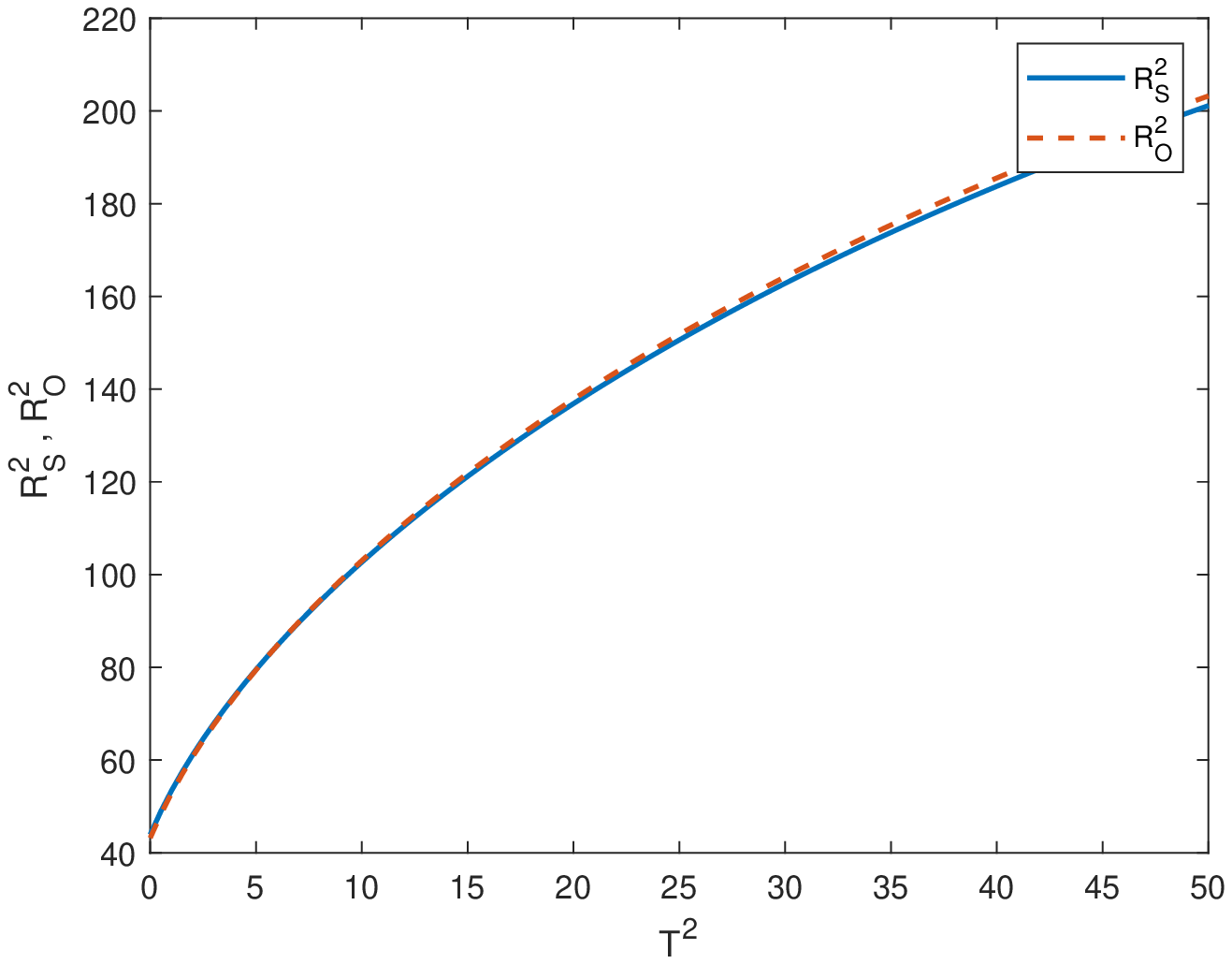}}
\subfigure[$\CC^2=5$]{
\includegraphics[scale=0.45]{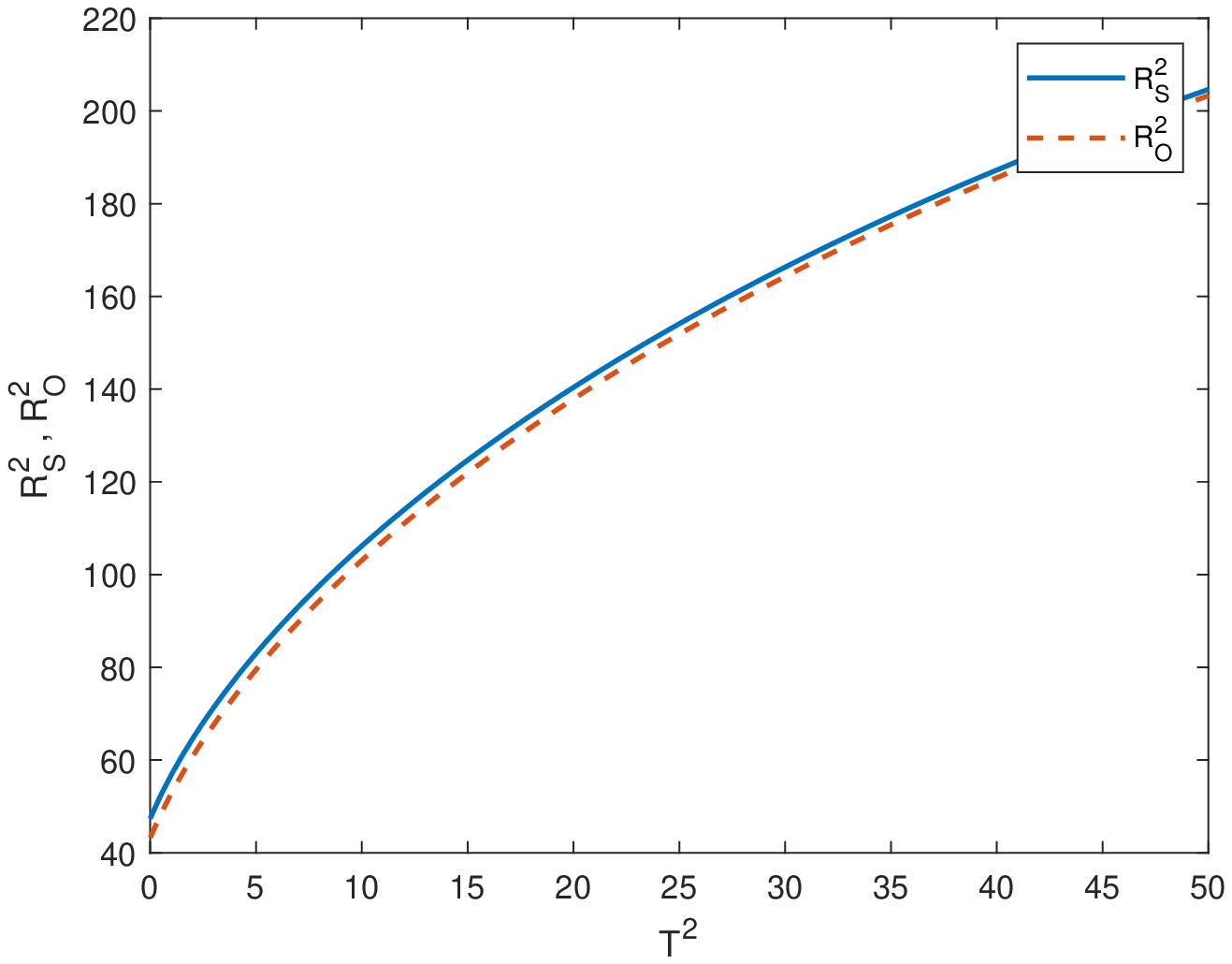}}
\caption{(a): Asymptotic behaviour of $\Ra^2_S$ and $\Ra_O^2$ with respect to $\T^2$ for $h=0.1,k=10,\CC=1.5,Da=0.001$. (b): asymptotic behaviour of $\Ra^2_S$ and $\Ra_O^2$ with respect to $\T^2$ for $h=0.1,k=10,\CC=5,Da=0.001$.}
\label{figT}
\end{figure}

\begin{figure}[h!]
\centering
\includegraphics[scale=0.45]{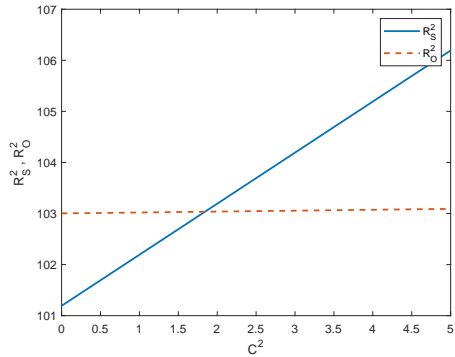}
\caption{Asymptotic behaviour of $\Ra^2_S$ and $\Ra_O^2$ with respect to $\CC^2$ for $h=0.1,k=10,\T^2=10,Da=0.001$.}
\label{figC}
\end{figure}

%

\begin{table}[h!]
\centering
\subtable[$h<<k$]{
\begin{tabular}{|c|c|c|}
\hline
$R^2_S$ & $R^2_O$ & $Da_f$ \\
\hline
106.1926 & 103.0914 & 0.001 \\
408.1480 & 410.4462 & 1 \\
1615.8 & 1639.7 & 5 \\
\hline
\end{tabular}
}
\qquad
\subtable[$h>>k$]{
\begin{tabular}{|c|c|c|}
\hline
$R^2_S$ & $R^2_O$ & $Da_f$ \\
\hline
54.0168 & 49.9827 & 0.001 \\
369.6938 & 371.3044 & 1 \\
1560 & 1582.9 & 5 \\
\hline
\end{tabular}
}
\caption{(a): Critical steady and oscillatory Rayleigh numbers for increasing Darcy number $Da_f$ for $h=0.1,k=10$. (b): Critical steady and oscillatory Rayleigh numbers for increasing Darcy number $Da_f$ for $h=10,k=0.1$. The other parameters are $\T^2=10, \CC^2=5$.}
\label{tab_Da-hk}
\end{table}
 
\begin{figure}[h!]
\centering
\subfigure[$Da_f=0.001$]{
\includegraphics[scale=0.45]{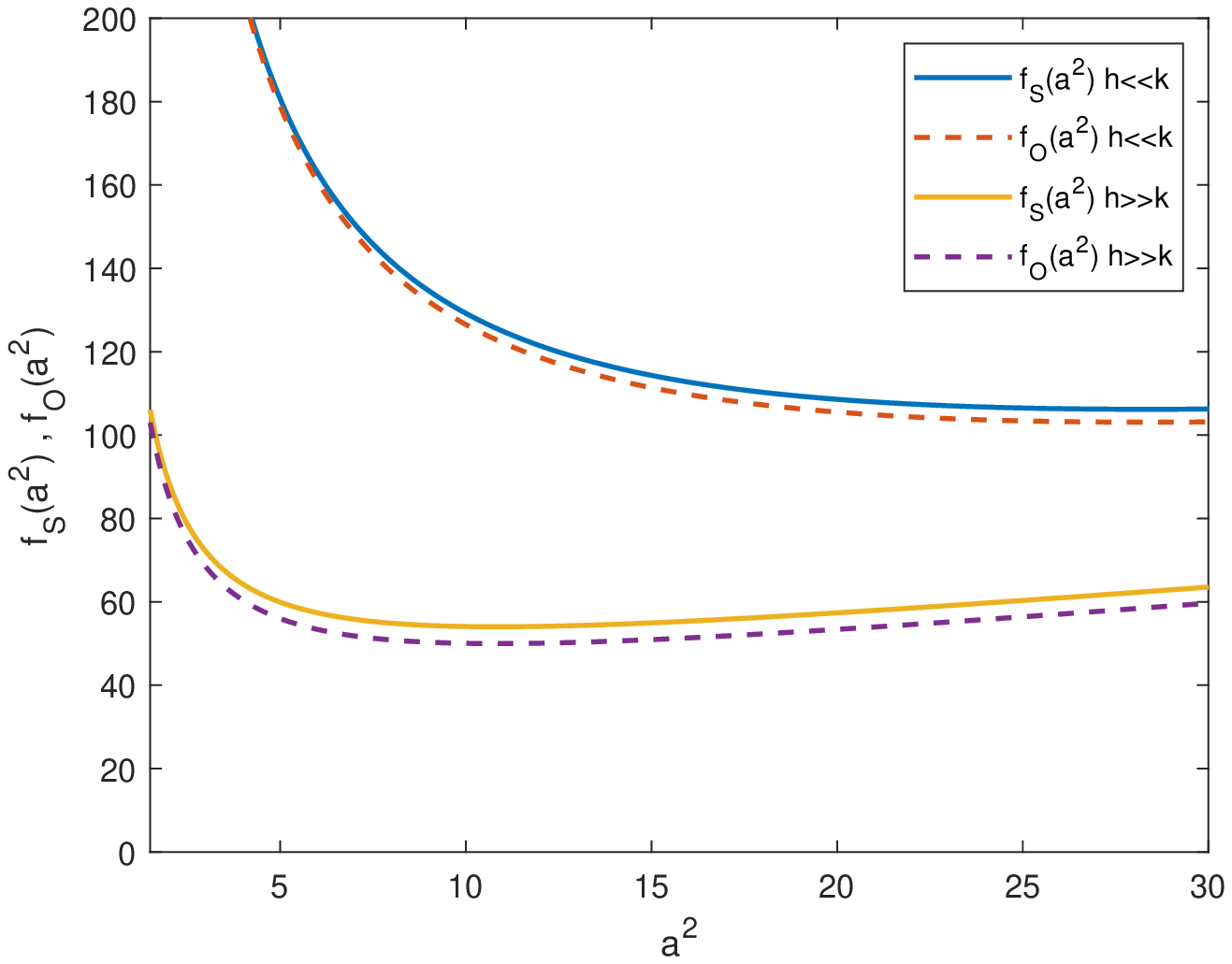}}
\subfigure[$Da_f=1$]{
\includegraphics[scale=0.45]{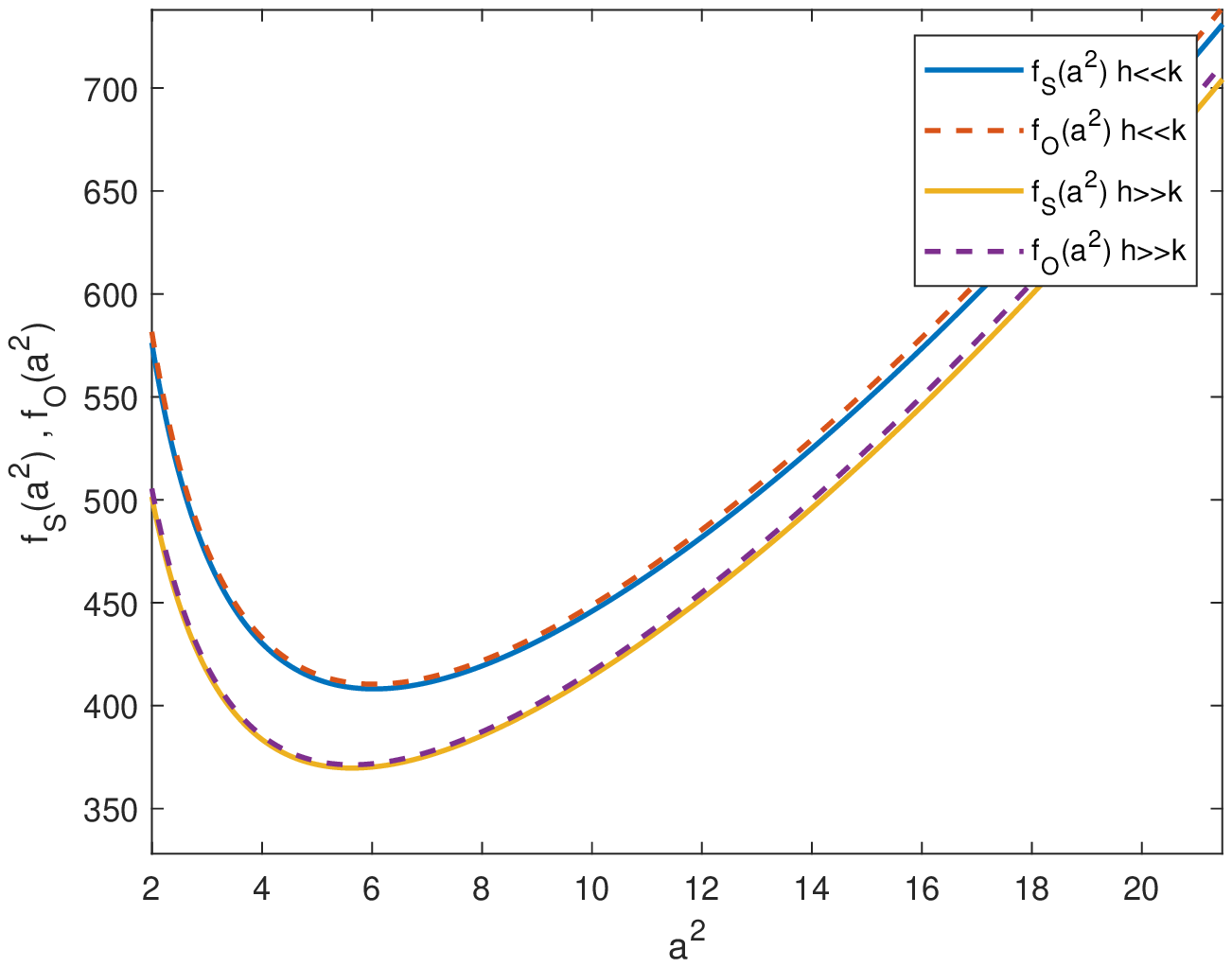}}
\caption{(a): Neutral curves at $Da_f=0.001$. (b): Neutral curves at $Da_f=1$. \\ The other parameters are $\T^2=10,\CC^2=5$. The case $h<<k$ is $h=0.1,k=10$, while the case $h>>k$ is $h=10,k=0.1$  }
\label{fig_Da-hk}
\end{figure}

\section{Conclusions}\label{concl}
In this paper, the onset of convection in a rotating horizontal layer of anisotropic bi-disperse porous material simultaneously heated and salted from below was analysed. We determined the instability thresholds for the onset of double-diffusive convection via steady and oscillatory motions. Moreover, we proved the validity of the principle of exchange of stabilities under the assumption $\epsilon_1 Le \leq 1$, so in this case only stationary convection can occur. Numerical simulations were performed in order to analyse the behaviour of the instability thresholds with respect to the fundamental parameters, in particular we found that rotation and concentration gradient act to delay the onset of convection.

\textbf{Acknowledgements.} This paper has been performed under the auspices of the GNFM of INdAM. \\ R. De Luca and G. Massa would like to thank Progetto Giovani GNFM 2020: "Problemi di convezione in nanofluidi e in mezzi porosi bidispersivi".

\bibliographystyle{unsrt}

\bibliography{BDPM_Acta-Sale}

\begin{thebibliography}{10}

\bibitem{NB}
D.A. Nield and A.~Bejan.
\newblock Convection in porous media.
\newblock {\em 5th edn. New York, NY: Springer}, 2017.

\bibitem{BSLibro}
B.~Straughan.
\newblock Convection with local thermal non-equilibrium and microfluidic
  effects.
\newblock {\em Adv Mechanics and Matematics, Springer, Cham, Switzerland}, 32,
  2015.

\bibitem{NK2005-1}
D.A. Nield and A.V. Kuznetsov.
\newblock A two-velocity temperature model for a bi-dispersed porous medium:
  forced convection in a channel.
\newblock {\em Trans. Porous Media}, 59:325–339, 2005.

\bibitem{NK2005}
D.A. Nield and A.V. Kuznetsov.
\newblock Heat transfer in bidisperse porous media.
\newblock {\em Transport Phenomena in Porous Media III}, pages 34--59, 2005.

\bibitem{NK2006}
D.A. Nield and A.V. Kuznetsov.
\newblock The onset of convection in a bidisperse porous medium.
\newblock {\em Int. J. Heat Mass Transf.}, 49(17-18):3068--3074, 2006.

\bibitem{chandr}
S.~Chandrasekhar.
\newblock Hydrodynamic and hydromagnetic stability.
\newblock {\em Dover Publicationas}, 1981.

\bibitem{Vadasz}
P.~Vadasz.
\newblock Flow and thermal convection in rotating porous media.
\newblock {\em Handbook of porous media}, pages 395--440, 2000.

\bibitem{Vadasz2}
P.~Vadasz.
\newblock Coriolis effect on gravity-driven convection in a rotating porous
  layer heated from below.
\newblock {\em J. Fluid Mech.}, 376:351--375, 1998.

\bibitem{massa1}
F.~Capone and G.~Massa.
\newblock The effects of vadasz term, anisotropy and rotation on bi-disperse
  convection.
\newblock {\em Int. J. Non-Lin. Mech.}, 135:103749, 2021.

\bibitem{massa3}
F.~Capone, M.~Gentile, and G.~Massa.
\newblock The onset of thermal convection in anisotropic and rotating
  bidisperse porous media.
\newblock {\em Z. Angew. Math. Phys.}, 72:169, 2021.

\bibitem{mulone2}
S.~Lombardo and G.~Mulone.
\newblock Necessary and sufficient conditions of global nonlinear stability for
  rotating double-diffusive convection in a porous medium.
\newblock {\em Continuum Mech. Thermodyn}, 14:527–540, 2002.

\bibitem{BSsale}
B.~Straughan.
\newblock Bidispersive double diffusive convection.
\newblock {\em Int. J. Heat Mass Transf.}, 126(A):504--508, 2018.

\bibitem{brian1}
B.~Straughan.
\newblock Anisotropic bidispersive convection.
\newblock {\em Proc. R. Soc. A.}, 475:20190206, 2019.

\bibitem{brian3}
B.~Straughan.
\newblock Horizontally isotropic double porosity convection.
\newblock {\em Proc. R. Soc. A.}, 475:20180672, 2019.

\bibitem{BSGentile}
M.~Gentile and B.~Straughan.
\newblock Bidispersive thermal convection.
\newblock {\em Int. J. Heat Mass Transf.}, 114:837--840, 2017.

\bibitem{coriolis}
F.~Capone, R.~De~Luca, and M.~Gentile.
\newblock Coriolis effect on thermal convection in a rotating bidisperive
  porous layer.
\newblock {\em Proc. R. Soc. A.}, 47620190875, 2020.

\bibitem{fals-mul-BS}
P.~Falsaperla, G.~Mulone, and B.~Straughan.
\newblock Bidispersive-inclined convection.
\newblock {\em Proc. R. Soc. A.}, 472(2192):20160480, 2016.

\bibitem{massa2}
F.~Capone, R.~De~Luca, and G.~Massa.
\newblock Effect of anisotropy on the onset of convection in rotating
  bi-disperse brinkman porous media.
\newblock {\em Acta Mech.}, 2021.

\bibitem{inertia}
F.~Capone and R.~De~Luca.
\newblock The effect of the vadasz number on the onset of thermal convection in
  rotating bidispersive porous media.
\newblock {\em Fluids}, 5(4):173, 2020.

\end{thebibliography}

\end{document}